\documentclass[12pt, a4paper]{article}
\usepackage{amsmath, amssymb, amsthm}
\usepackage{graphicx}
\usepackage{hyperref}
\usepackage{listings}
\usepackage{xcolor}
\usepackage{tikz}
\usetikzlibrary{automata, positioning, arrows}
\usepackage{float}
\usepackage{authblk}
\usepackage{algorithmicx}
\usepackage{algorithm}
\usepackage{algpseudocode}

\title{Equivalence of Halting Problem to Convergence of Power Series}
\author[1]{Antonio Joaquim Fernandes}
\affil[1]{FCT, Universidade de Cabo Verde}
\date{\today}

\newtheorem{theorem}{Theorem}

\newtheorem{definition}{Definition}

\newtheorem{corollary}{Corollary}

\begin{document}

\maketitle

\begin{abstract}
This paper establishes an equivalence between the halting problem in computability theory and the convergence of power series in mathematical analysis. We demonstrate that for any given computer program, one can algorithmically construct a power series whose convergence behavior is equivalent to the program’s halting status. The construction is explicit and leverages the program's state transitions to define the series' coefficients. Specifically, we show that a power series that converges everywhere corresponds to a non-halting program, while a power series that diverges everywhere except at the origin corresponds to a halting program. This result reveals deep connections between computation and analysis and provides insights into the undecidability of convergence problems in mathematics. While the undecidability of convergence problems was known from previous work, our specific construction provides a particularly clear and direct demonstration of this phenomenon. Furthermore, it opens avenues for exploring the boundaries of decidability within classical analysis.
\end{abstract}

\section{Introduction}
The Halting problem, proven undecidable by Alan Turing in 1936 \cite{turing1936computable}, represents a cornerstone of computability theory. It establishes that no general algorithm exists to determine whether an arbitrary computer program will eventually halt or run indefinitely. 

In mathematical analysis, determining the convergence of power series is a classical problem with no universal decision procedure, despite the existence of various convergence tests \cite{rice1953classes,pourel1989computable, weihrauch2000computable}. The connection between these two fundamental concepts has been explored in various forms in the literature, though our specific construction appears to be novel.

This paper bridges these two fundamental concepts by demonstrating their essential equivalence. We show that the Halting Problem can be reduced to determining the convergence of appropriately constructed power series, and vice versa. This connection provides an analytic interpretation of computability and highlights the computational complexity inherent in analytical problems.

The connection between computability theory and analysis has been explored by several researchers. The most relevant previous work includes:

\subsection{Computable Analysis}
The field of computable analysis, pioneered by mathematicians such as Grzegorczyk \cite{grzegorczyk1955computable}, Lacombe \cite{lacombe1955extension}, and later developed by Pour-El and Richards \cite{pourel1989computable}, established foundations for understanding which analytical objects are computable. These works demonstrated that many problems in analysis are undecidable or non-computable.

\subsection{Specker Sequences}
Ernst Specker \cite{specker1949nicht} constructed the first examples of computable monotone bounded sequences that do not converge to a computable real number. This result demonstrated early connections between computability and convergence.

\subsection{Rice's Theorem and Its Implications}
Rice's Theorem \cite{rice1953classes} established that all non-trivial semantic properties of programs are undecidable. This result has implications for convergence problems, as it suggests that determining whether a computable sequence converges is undecidable in general.

\subsection{Recent Work}
More recently, researchers such as Weihrauch \cite{weihrauch2000computable} and Brattka \cite{brattka2008computable} have developed a more rigorous framework of computable analysis, providing sophisticated tools for understanding the computability of analytical problems.

While these works established important foundations, our specific construction showing a direct equivalence between the Halting Problem and power series convergence appears to be novel in its formulation and simplicity.

\section{Preliminaries}

\subsection{Decidability of formal systems}

A mathematical problem is considered decidable if there exists an effective method (a recursive function or a Turing machine) that can correctly determine the answer to all instances of the problem within a finite number of steps. 

While elementary arithmetic operations and many algebraic problems are decidable, more complex mathematical domains often contain undecidable problems which arises not from insufficient mathematical knowledge but from fundamental computational limitations. Thus, decidability serves as a crucial boundary separating computationally tractable problems from those that fundamentally resist algorithmic solution, regardless of mathematical sophistication.

Kurt Gödel's incompleteness theorems \cite{godel1931} were foundational to mathematical logic and demonstrated limitations of formal systems but the direct proof of the undecidability of the Halting Problem belongs to Church and Turing.

The paradigmatic example of an undecidable problem is Hilbert's Entscheidungsproblem (decision problem), asking whether there exists an algorithm that can determine the validity of any given first-order logic statement. Alonzo Church \cite{church1936note}, in 1936, used lambda calculus to show that no general procedure exists to decide first-order logical validity.

Alan Turing \cite{turing1936computable}, in the same year, independently proved the undecidability of the Entscheidungsproblem using his conceptualization of Turing machines and demonstrating the unsolvability of the Halting Problem, that no general procedure can determine whether an arbitrary computer program will eventually halt or run indefinitely. 

This dual proof established that in Programming as well as in formal framework of mathematical logic, there exist well-formed questions that cannot be resolved by any mechanical procedure.

\subsection{Computability Theory}
While it is also possible to use lambda calculus, we adopt the Turing machine model of computation. 
\begin{definition}
A program $P$ is any finite mathematical object that specifies a partial function $f: \mathbb{N} \to \mathbb{N}$ (or between other countable sets) such that:
\begin{itemize}
\item The program can be effectively encoded as a natural number (Gödel numbering)
\item There exists a universal program/interpreter $U$ such that for any program $P$ and input $x$:
\[
U(P, x) = f_P(x)
\]
where $f_P$ is the function computed by program $P$
\item The set of all programs $\mathcal{P}$ is recursively enumerable
\end{itemize}
\end{definition}
Let $\mathcal{P}$ denote the set of all programs (assuming that the set $\mathcal{P}$ is not itself a program). For a program $P \in \mathcal{P}$ and input $x$, we write $P(x)\downarrow$ if $P$ halts on input $x$, and $P(x)\uparrow$ if it does not.

\begin{definition}[Halting Problem]
The Halting Problem is the decision problem that determines, given a program $P$ and an input $x$, whether $P(x)\downarrow$.
\end{definition}

\begin{theorem}[Turing \cite{turing1936computable}]
The Halting Problem is undecidable.
\end{theorem}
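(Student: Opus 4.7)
The plan is to proceed by contradiction via a diagonal argument. I would first assume, for the sake of contradiction, that there exists a program $H$ deciding the Halting Problem: that is, for every program $P$ and input $x$, the program $H$ halts and outputs $1$ if $P(x)\downarrow$ and $0$ if $P(x)\uparrow$. The Gödel numbering built into the definition of a program is the crucial ingredient, since it allows any program to be encoded as a natural number and thus passed as input to another program, which is what makes self-reference possible.

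Next, I would construct a new program $D$ that, on input (a code for) a program $P$, first simulates $H(P,P)$ using the universal interpreter $U$, and then branches: if $H(P,P) = 1$ it enters an infinite loop, and if $H(P,P) = 0$ it halts immediately. Because $H$ was assumed to be a total program and the operations of composition, conditional branching, and nonterminating looping are all available in the Turing machine model, $D$ is a bona fide program and hence has its own Gödel code, which I will conflate with $D$ itself. I would then evaluate $D$ on its own code and derive a contradiction in both cases: if $D(D)\downarrow$, then by construction $H(D,D)=0$, so $D(D)\uparrow$, a contradiction; and if $D(D)\uparrow$, then $H(D,D)=1$, so $D(D)\downarrow$, again a contradiction. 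Since the only unsupported assumption was the existence of $H$, no such decider can exist.

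The logical skeleton of the argument is short and clean, so the main obstacle is not the diagonalization itself but the verification that the diagonal program $D$ is genuinely expressible within the formal model of computation being used. I would handle this by explicitly appealing to the closure of the class of programs under composition with $U$, conditional branching on the output of a halting computation, and the availability of a trivial nonterminating loop; all of these are routine in the Turing machine framework (equivalently in $\mu$-recursive functions or $\lambda$-calculus), so the construction of $D$ is legitimate and the contradiction is genuine.
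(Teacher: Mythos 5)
Your proposal is the classical diagonalization argument, and it is correct: assuming a total decider $H$, constructing $D$ that inverts the halting behavior of $H(P,P)$, and deriving a contradiction from $D(D)$ is exactly Turing's proof, and you rightly flag that the Gödel numbering and closure of the program class under composition, branching, and looping are what legitimize the self-application. The paper itself states this theorem without proof, simply citing Turing, so there is no in-paper argument to compare against; your write-up supplies the standard proof that the citation points to.
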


\subsection{Power Series and Effective Convergence}

\begin{definition} A power series, with center at $c=0$, is a formal expression of the type:
\[
S(z) = \sum_{n=0}^{\infty} a_n z^n,
\]
where $a_n$ are complex coefficients and $z$ is a complex variable.
\end{definition}

\begin{definition} A power series $S(z)$ \emph{converges} at $z_0$ if the sequence of partial sums $S_N(z_0) = \sum_{n=0}^{N} a_n z_0^n$ converges as $N \to \infty$.
\end{definition}

The radius of convergence $R$ is given by:
\[
R = \frac{1}{\limsup_{n \to \infty} |a_n|^{1/n}}.
\]

The series converges absolutely for $|z| < R$ and diverges for $|z| > R$.

\begin{definition} A function $f: \mathbb{N} \to \mathbb{N}$ is \emph{computable} if there exists a Turing machine (or equivalent computational model) that, given input $n \in \mathbb{N}$, halts and outputs $f(n)$.
\end{definition}

\begin{definition}
A real number $x$ is computable if there exists a computable function $f: \mathbb{N} \to \mathbb{Q}$ such that for all $n \in \mathbb{N}$:
\[
|f(n) - x| < 2^{-n}
\]
That is, there exists an algorithm that can approximate $x$ to within any desired precision.

Equivalently, a real number $x$ is computable if its decimal (or binary) expansion is computable, meaning there exists an algorithm that, given $n$, outputs the $n$-th digit of $x$.

\end{definition}

\begin{definition} A sequence $\{s_n\}$ of real numbers \emph{converges effectively} to a limit $L$ if there exists a computable function $e: \mathbb{N} \to \mathbb{N}$ such that for all $n \in \mathbb{N}$:
\[
k \geq e(n) \Rightarrow |s_k - L| < 2^{-n}
\]
That is, the rate of convergence is computably bounded.
\end{definition}

\begin{definition} A sequence of functions $\{f_n\}$ \emph{converges effectively uniformly} to $f$ on a domain $D$ if there exists a computable function $e: \mathbb{N} \to \mathbb{N}$ such that for all $n \in \mathbb{N}$ and all $z \in D$:
\[
k \geq e(n) \Rightarrow |f_k(z) - f(z)| < 2^{-n}
\]
\end{definition}

\begin{definition}A power series $S(z) = \sum_{n=0}^{\infty} a_n z^n$ is \emph{effectively computable} on a domain $D$ if:
\begin{enumerate}
\item The coefficients $\{a_n\}$ form a computable sequence
\item The series converges effectively uniformly on compact subsets of $D$
\item The rate of convergence is computable: there exists a computable function $N: \mathbb{N} \times \mathbb{R} \to \mathbb{N}$ such that for all $m \in \mathbb{N}$ and all $z$ in any compact $K \subset D$:
\[
k \geq N(m, \max_{z \in K} |z|) \Rightarrow \left|\sum_{n=0}^{k} a_n z^n - S(z)\right| < 2^{-m}
\]
\end{enumerate}
\end{definition}

\begin{theorem}[Effective Convergence Criterion]
A power series $S(z) = \sum_{n=0}^{\infty} a_n z^n$ with computable coefficients converges effectively on the disk $|z| < R$ if there exists a computable function $M: \mathbb{N} \to \mathbb{N}$ such that for all $n \geq M(k)$:
\[
|a_n|^{1/n} < \frac{1}{R} + 2^{-k}
\]
\end{theorem}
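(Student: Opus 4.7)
The plan is to convert the effective limsup bound on $|a_n|^{1/n}$ into an effective geometric tail estimate on each compact subset of the disk $\{|z| < R\}$. Fix a compact $K \subset \{|z| < R\}$ and write $r = \max_{z \in K} |z|$, so $r < R$. The task reduces to producing, from the data $(m, r)$, a computable integer $N$ such that the tail $\sum_{n \geq N} |a_n|\, r^n$ is smaller than $2^{-m}$, uniformly in $z \in K$.

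First I would pick $k = k(r, R)$ large enough to strictly beat the ratio $r/R < 1$: choose $k$ so that $2^{-k} < (R - r)/(2rR)$, and set $\rho := r/R + r\cdot 2^{-k}$, which then satisfies $\rho \leq \tfrac{1}{2}(1 + r/R) < 1$. By the hypothesis on $M$, for every $n \geq M(k)$ and every $|z| \leq r$ we have $|a_n z^n| \leq (|a_n|^{1/n} r)^n < \rho^n$, hence
\[
\left|\,S(z) - \sum_{n=0}^{N-1} a_n z^n\,\right| \;\leq\; \sum_{n=N}^{\infty} |a_n z^n| \;<\; \frac{\rho^N}{1-\rho}
\]
for every $N \geq M(k)$. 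Inverting this geometric tail bound, the choice
\[
N(m,r) \;:=\; \max\!\left\{\, M(k),\ \left\lceil \frac{m \ln 2 + \ln\bigl(1/(1-\rho)\bigr)}{\ln(1/\rho)} \right\rceil \,\right\}
\]
yields $\rho^N/(1-\rho) < 2^{-m}$ and is computable in the data $(m, r, R)$, so the partial sums converge effectively uniformly on $K$ in the sense of the earlier definition.

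The main obstacle is not the analytic estimate itself but the careful bookkeeping at the level of computable reals: $\rho$, $\ln(1/\rho)$, and $\ln(1/(1-\rho))$ are in general computable reals rather than rationals, so one must compute rational approximations with certified error and then slightly inflate both $k$ and the ceiling in $N(m,r)$ to preserve the strict inequalities $\rho < 1$ and $\rho^N/(1-\rho) < 2^{-m}$. Concretely this means supplying $r$ as a rational upper bound $r' < R$ on $\max_{z \in K}|z|$ and replacing each real-valued step by an algorithm returning a rational bound with a controlled gap; this is precisely what justifies the $\mathbb{N} \times \mathbb{R}$ signature of $N$ in the definition of effective computability.
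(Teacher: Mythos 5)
Your proposal is correct and follows the same route the paper's proof sketch gestures at: use the computable bound $M(k)$ on $|a_n|^{1/n}$ to dominate the tail on each compact $K$ by a geometric series with computable ratio $\rho < 1$, then invert to get a computable rate $N(m,r)$. The paper leaves this as a sketch citing Cauchy--Hadamard and the references, so your write-up simply supplies the details (including the appropriate caveats about rational approximation of $\rho$ and the logarithms) of the argument the paper has in mind.
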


\begin{proof}[Proof Sketch]
The result follows from the Cauchy-Hadamard theorem and effective analysis techniques. The computable function $M$ provides an effective bound on the growth rate of the coefficients, which allows us to compute the rate of convergence uniformly on compact subsets of $|z| < R$.

For a detailed proof, see Theorem 2.3.4 \cite{pourel1989computable} or Section 6.2 \cite{weihrauch2000computable}.

\end{proof}

\begin{algorithm}
\caption{Effective computation of power series partial sums}
\begin{algorithmic}[1]
\Procedure{EffectivePartialSum}{$z, m, \{a_n\}, N$}
\State \Comment{$N(m,|z|)$ is the computable convergence rate function}
\State $k_{\text{max}} \gets N(m, |z|)$
\State $sum \gets 0$
\For{$n = 0$ to $k_{\text{max}}$}
\State $sum \gets sum + a_n \cdot z^n$
\EndFor
\State \textbf{return} $sum$ \Comment{Guaranteed $|sum - S(z)| < 2^{-m}$}
\EndProcedure
\end{algorithmic}
\end{algorithm}

Effective convergence is essential for actual computation. Without a computable rate of convergence, we cannot guarantee the accuracy of approximations, even if the series theoretically converges.

\section{From Halting to Power Series Convergence}

We now construct a power series whose convergence behavior encodes the halting status of a given program.

\subsection{Encoding Program Execution}
Let $P$ be a program and $x$ an input. Define a function $f: \mathbb{N} \to \{0,1\}$ that tracks the program's execution:
\[
f(n) = 
\begin{cases}
1 & \text{if } P(x) \text{ has halted by step } n, \\
0 & \text{otherwise}.
\end{cases}
\]

By construction, $P(x)\downarrow$ if and only if there exists some $n_0$ such that $f(n) = 1$ for all $n \geq n_0$.

\subsection{Power Series Construction}
Define the coefficients:
\[
a_n = 
\begin{cases}
0 & \text{if } f(n) = 0 \text{ (program has not halted by step } n), \\
n! & \text{if } f(n) = 1 \text{ (program has halted by step } n).
\end{cases}
\]

Consider the power series:
\[
S(z) = \sum_{n=0}^{\infty} a_n z^n.
\]

\begin{theorem}
The power series $S(z)$ satisfies:
\begin{enumerate}
\item If $P(x)\uparrow$ (program does not halt), then $S(z) = 0$ for all $z \in \mathbb{C}$ (converges everywhere).
\item If $P(x)\downarrow$ (program halts), then $S(z)$ converges only at $z = 0$ and diverges for all $z \neq 0$.
\end{enumerate}
\end{theorem}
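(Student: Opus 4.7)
The plan is to handle the two cases separately, with case~1 being essentially immediate from the definitions and case~2 reducing to a standard ratio-test argument applied to a tail of the series.

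For case~1, I would simply unpack the definition of $f$: if $P(x)\uparrow$, then by construction $f(n)=0$ for every $n\in\mathbb{N}$, so $a_n=0$ for every $n$, and hence every partial sum $S_N(z)=\sum_{n=0}^{N} a_n z^n$ is identically zero. Therefore $S(z)=0$ for all $z\in\mathbb{C}$, which is the desired (trivial) convergence everywhere. No estimates are needed here.

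For case~2, I would first observe that $z=0$ is trivial: $S(0)=a_0$ is a single finite number. For $z\neq 0$, I would use the hypothesis that $P(x)\downarrow$ to produce an integer $n_0$ such that $f(n)=1$, and hence $a_n=n!$, for all $n\geq n_0$. Convergence of the series is unaffected by truncating the first $n_0$ terms, so it suffices to analyze the tail $T(z)=\sum_{n\geq n_0} n!\,z^n$. I would then apply the ratio test: for fixed $z\neq 0$,
\[
\frac{|(n+1)!\,z^{n+1}|}{|n!\,z^n|} = (n+1)\,|z| \;\longrightarrow\; \infty \quad\text{as } n\to\infty,
\]
so in particular $|n!\,z^n|\to\infty$, the general term fails to tend to zero, and the series diverges. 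Equivalently, the radius of convergence computed via Cauchy--Hadamard is
\[
R = \Bigl(\limsup_{n\to\infty}(n!)^{1/n}\Bigr)^{-1} = 0,
\]
using Stirling's estimate $(n!)^{1/n}\sim n/e\to\infty$, which confirms that the only point of convergence is $z=0$.

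I do not anticipate a substantive obstacle here: the construction is set up precisely so that the halting status of $P$ switches the coefficients between identically zero (giving an entire zero function) and a sequence dominated by $n!$ from some point on (giving zero radius of convergence). The only point that needs care is the reminder that the truncated initial segment of the series is a polynomial and therefore cannot affect whether the full series converges at any given $z$; everything else reduces to the elementary fact that $n!\,z^n$ fails to be bounded for any nonzero $z$.
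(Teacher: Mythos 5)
Your proposal is correct and follows essentially the same route as the paper: case~1 is immediate from $a_n\equiv 0$, and case~2 isolates the tail $\sum_{n\ge n_0} n!\,z^n$ and applies the ratio test, whose unbounded ratio $(n+1)|z|$ forces divergence for every $z\neq 0$. Your added remarks (that the general term fails to tend to zero, and the Cauchy--Hadamard computation $R=0$ via Stirling) are correct supplementary justifications but do not change the argument.
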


\begin{proof}
We consider both cases:

\textbf{Case 1:} $P(x)\uparrow$. Then $f(n) = 0$ for all $n$, so $a_n = 0$ for all $n$. Thus $S(z) = 0$, which converges for all $z \in \mathbb{C}$.

\textbf{Case 2:} $P(x)\downarrow$. Let $n_0$ be the step at which $P(x)$ halts. Then:
\begin{itemize}
\item For $n < n_0$: $f(n) = 0$, so $a_n = 0$.
\item For $n \geq n_0$: $f(n) = 1$, so $a_n = n!$.
\end{itemize}

Thus:
\[
S(z) = \sum_{n=n_0}^{\infty} n! z^n.
\]

At $z = 0$, $S(0) = 0$ (converges). For $z \neq 0$, apply the ratio test:
\[
\left| \frac{a_{n+1}z^{n+1}}{a_n z^n} \right| = \frac{(n+1)!|z|}{n!} = (n+1)|z| \to \infty \text{ as } n \to \infty.
\]
Hence, the series diverges for all $z \neq 0$.
\end{proof}

\begin{corollary}
Determining whether $S(z)$ converges for some $z \neq 0$ is equivalent to solving the Halting Problem for $P$ and $x$.
\end{corollary}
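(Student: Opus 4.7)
The plan is to establish a Turing reduction in both directions between the decision problem ``does $S(z)$ converge for some $z \neq 0$?'' (where $S$ is constructed from $(P,x)$ as in the previous subsection) and the Halting Problem for $(P,x)$. The conclusion essentially follows from repackaging the preceding theorem, but I would be careful to (i) verify that the construction $(P,x) \mapsto \{a_n\}$ is genuinely effective, so that the reduction can actually be carried out by a Turing machine, and (ii) state precisely which direction of the biconditional encodes halting versus non-halting.

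First I would handle the reduction from the Halting Problem to the convergence problem. Given an instance $(P,x)$, I note that the coefficient sequence $\{a_n\}$ is computable: on input $n$, a machine simulates $P(x)$ for $n$ steps and outputs $n!$ if a halting configuration has been reached and $0$ otherwise. Thus the map from $(P,x)$ to (a description of) the series $S$ is total computable. By the theorem, $P(x) \uparrow$ if and only if $S$ converges at every $z \in \mathbb{C}$, while $P(x) \downarrow$ if and only if $S$ converges only at $z = 0$. Hence $S$ converges at some $z \neq 0$ iff $P(x) \uparrow$. A decision procedure for the convergence question would therefore decide non-halting, and taking complements gives a decision procedure for halting, contradicting Turing's theorem.

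For the converse direction I would observe that a halting oracle trivially resolves the convergence question for any series of this form: given $(P,x)$, query whether $P(x) \downarrow$; if yes, output ``converges only at $0$'', and if no, output ``converges everywhere''. Correctness is again immediate from the theorem. This pair of reductions is what I would take to justify the word ``equivalent'' in the corollary statement, and I would make explicit at the end whether this is a many-one equivalence or merely a Turing equivalence (the construction above yields a many-one reduction from Halting to the nontrivial-convergence problem restricted to this constructed family).

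I do not expect a serious obstacle in the argument itself, since the theorem has already done the analytic work via the ratio test. The only genuinely delicate point is making sure the scope of ``the convergence problem'' is interpreted consistently: the corollary is not claiming that the general problem of deciding convergence of an arbitrary power series is Turing-equivalent to Halting (that would require a separate discussion of how power series are presented as inputs), but rather that for this particular uniformly computable family $\{S_{(P,x)}\}$, the two decision problems coincide. I would therefore close the proof with a short remark pinning down the class of instances to which the equivalence applies.
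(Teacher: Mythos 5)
Your proposal is correct and follows essentially the same route as the paper: both directions rest on the dichotomy from the preceding theorem (convergence at some $z \neq 0$ holds iff $P(x)\uparrow$), together with the observation that the map $(P,x) \mapsto \{a_n\}$ is effective because $a_n$ is computed by simulating $P(x)$ for $n$ steps. Your additional care about distinguishing many-one from Turing reducibility and about restricting the claim to the constructed family $\{S_{(P,x)}\}$ is a refinement the paper's proof leaves implicit, but it does not change the underlying argument.
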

\begin{proof}
The equivalence follows directly from the construction in Theorem 1:

($\Rightarrow$) If we can determine that $S(z)$ converges for some $z \neq 0$, then by Theorem 1(2), this implies $P(x) \uparrow$ (the program does not halt).

($\Leftarrow$) Conversely, if we can solve the Halting Problem and determine that $P(x) \uparrow$, then by Theorem 1(1), $S(z)$ converges for all $z \in \mathbb{C}$, and in particular for some $z \neq 0$.

The reduction is computable: given any program $P$ and input $x$, we can effectively construct the power series $S(z)$ with coefficients defined by:
\[
a_n = 
\begin{cases}
0 & \text{if } P(x) \text{ has not halted by step } n \\
n! & \text{if } P(x) \text{ has halted by step } n
\end{cases}
\]
\end{proof}

\begin{corollary}
Determining whether $S(z)$ converges for some $z \neq 0$ is undecidable. 
\end{corollary}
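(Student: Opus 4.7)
The plan is to derive this corollary as an immediate consequence of Corollary 1 combined with Turing's theorem on the undecidability of the Halting Problem, which was cited in the Preliminaries. The argument is essentially a contrapositive (or contradiction) via many-one reduction: since Corollary 1 exhibits a computable reduction from Halting to the convergence question for $S(z)$, any decision procedure for convergence would transport back to a decision procedure for Halting.

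First I would suppose, toward a contradiction, that there exists a Turing machine $D$ that, given the description of a power series of the form constructed in Section 3 (equivalently, given a pair $(P,x)$ from which we build the series), decides whether the associated $S(z)$ converges at some $z \neq 0$. Next, I would invoke the explicit effective reduction recorded at the end of Corollary 1: from any instance $(P,x)$ of the Halting Problem, the coefficients $a_n$ are produced by a uniform algorithm that simulates $P(x)$ for $n$ steps and outputs either $0$ or $n!$ accordingly, so the map $(P,x) \mapsto S$ is computable. Composing this reduction with $D$ yields a total algorithm that, on input $(P,x)$, answers whether $S$ converges somewhere off the origin; by Theorem 1 this answer is \textsc{yes} exactly when $P(x)\uparrow$, so negating it decides the Halting Problem. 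This contradicts the undecidability theorem of Turing stated in the Preliminaries, so no such $D$ exists.

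I would close by remarking that essentially the same argument shows the stronger statement that deciding convergence of $S$ at any particular fixed nonzero point (say $z=1$) is already undecidable, since Theorem 1(2) gives divergence at every $z\neq 0$ in the halting case and convergence everywhere in the non-halting case; the decision problem does not depend on the choice of test point. The only delicate issue is confirming that the reduction is genuinely computable, which amounts to observing that simulating $P(x)$ for a bounded number of steps and computing $n!$ are standard primitive recursive operations; this was already handled inside Corollary 1, so no further work is required and I expect no real obstacle beyond citing the prior results in the correct order.
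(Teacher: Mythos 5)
Your proof is correct and follows essentially the same route as the paper: it reduces the Halting Problem to the convergence question via the computable construction of Corollary 1 and concludes undecidability by contradiction with Turing's theorem. Your version merely spells out the contradiction argument and the computability of the reduction more explicitly than the paper's terser proof, which is fine.
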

\begin{proof}
Since the Halting Problem is undecidable, it follows that determining whether $S(z)$ converges for some $z \neq 0$ is also undecidable. Moreover, any algorithm that could decide convergence of $S(z)$ for some $z \neq 0$ could be used to solve the Halting Problem, establishing the equivalence.
\end{proof}

\section{From Power Series to Halting}

We now show the reverse reduction: given a power series, we can construct a program whose halting behavior depends on the series' convergence. 

Consider a computable power series $S(z) = \sum_{n=0}^{\infty} a_n z^n$ where the coefficients $a_n$ are computable real numbers. We focus on convergence at $z = 1$.

\subsection{Program Construction}
Define a program $Q$ that does the following:
\begin{enumerate}
\item For $N = 1, 2, 3, \ldots$:
\begin{itemize}
\item Compute the partial sum $S_N(1) = \sum_{n=0}^{N} a_n$.
\item Check if $|S_N(1)| > N$ (divergence criterion).
\item If yes, halt and return "divergent".
\end{itemize}
\end{enumerate}

\begin{theorem}
If $S(1)$ diverges, then $Q$ halts. If $S(1)$ converges, then $Q$ may not halt.
\end{theorem}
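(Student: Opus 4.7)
The plan is to treat the two assertions separately. For the first claim, the natural route is to argue directly from the code of $Q$: the program halts precisely when it encounters an $N$ with $|S_N(1)| > N$, so it suffices to show that divergence of $S(1)$ forces the existence of such an $N$. For the second claim, I would exhibit a single convergent series on which $Q$ loops forever, e.g.\ $a_n = 0$ for all $n$, giving $S_N(1) = 0 < N$ for every $N$; this settles the ``may not halt'' part at once. A natural companion example is any geometric series $a_n = r^n$ with $|r| < 1$, whose partial sums are uniformly bounded and therefore again never trigger the halt condition.

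The principal obstacle lies squarely in the first claim, and it is a genuine one: the implication ``$S(1)$ diverges $\Rightarrow$ some $N$ satisfies $|S_N(1)| > N$'' is not valid in general. Classical counterexamples include the harmonic series, whose partial sums grow like $\log N$ and thus never exceed $N$, and $\sum(-1)^n$, whose partial sums oscillate in $\{0,1\}$ and so stay uniformly bounded. Both diverge at $z=1$, yet on either input $Q$ runs forever. Consequently, any honest proof must either restrict the hypothesis or strengthen the program.

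I would take the second route, since the section is supposed to realize a reduction from convergence to halting. Concretely, I would replace the test $|S_N(1)|>N$ by an effective Cauchy-failure search: enumerate triples $(k,N,M)$ with $M>N$ and halt as soon as $|S_M(1)-S_N(1)|>2^{-k}$. Under the standing hypothesis that $\{a_n\}$ is a computable sequence of computable reals, each test is effective, and the Cauchy criterion gives exactly: the modified $Q$ halts iff $S(1)$ diverges. The convergent-case half then still holds in the same ``may not halt'' sense, witnessed by the same examples. If instead one insists on keeping the program verbatim as written, the cleanest salvage is to weaken the first claim to: ``if $\limsup_N |S_N(1)|/N = \infty$, then $Q$ halts,'' which is immediate from the definition of $\limsup$ and includes the typical ``wildly divergent'' cases one has in mind.

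In the write-up I would lead with the Cauchy-failure variant, flag the counterexamples to the original formulation as a remark, and then state and prove the corrected theorem, after which the undecidability corollary (analogous to Corollary~2) follows by the same argument used in Section~3.
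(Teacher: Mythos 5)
You are right that the theorem's first claim is false as stated and that the paper's own proof is broken: the paper argues that divergence of $S(1)$ makes the partial sums unbounded (already false for $\sum(-1)^n$) and that unboundedness eventually forces $|S_N(1)|>N$ (false for the harmonic series, whose partial sums grow like $\log N$). Your two counterexamples are exactly the right ones, and your fallback statement --- that $Q$ halts whenever $\limsup_N |S_N(1)|/N=\infty$ --- is correct and immediate. On the critical half of the theorem you have therefore done better than the paper, whose proof simply asserts the two invalid implications. The ``may not halt'' half is handled correctly by your zero-series and geometric-series examples.

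However, your proposed repair does not work, and no repair of this kind can. The program that ``enumerates triples $(k,N,M)$ with $M>N$ and halts as soon as $|S_M(1)-S_N(1)|>2^{-k}$'' halts on essentially every series: for any two unequal partial sums there is some $k$ with $|S_M(1)-S_N(1)|>2^{-k}$, so it halts on $\sum 2^{-n}$ (take $N=0$, $M=1$, $k=2$), which is one of your own convergent examples. The genuine negation of the Cauchy criterion is $\exists k\,\forall N\,\exists m,n\geq N:\ |S_m(1)-S_n(1)|\geq 2^{-k}$, and the inner $\forall N$ cannot be certified by any finite computation. More fundamentally, divergence of a computable series at $z=1$ is not a $\Sigma_1$ property: setting $a_n=1/(n+1)$ while a given program has not yet halted and $a_n=0$ thereafter yields a computable series that diverges iff the program never halts, so the divergent series form a $\Pi_1$-hard class and no program can halt on exactly the divergent inputs. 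The honest conclusion is that the ``diverges $\Rightarrow$ $Q$ halts'' half must be weakened (e.g., to your $\limsup$ condition, or to series whose partial sums are effectively unbounded), not that $Q$ can be strengthened to detect divergence in general; the same obstruction defeats the paper's subsequent ``stronger equivalence'' theorem as well.
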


\begin{proof}
If $S(1)$ diverges, the partial sums $S_N(1)$ are unbounded. Thus, eventually $|S_N(1)| > N$ for some $N$, causing $Q$ to halt.

If $S(1)$ converges, the partial sums are bounded. However, the bound may be unknown, and $Q$ may continue checking indefinitely without ever satisfying $|S_N(1)| > N$.
\end{proof}

To establish a stronger equivalence, we use a more sophisticated construction:

\begin{theorem}
There exists a computable transformation that, given a power series $S(z)$, produces a program $P_S$ such that:
\begin{enumerate}
\item $S(1)$ converges if and only if $P_S$ does not halt.
\item $S(1)$ diverges if and only if $P_S$ halts.
\end{enumerate}
\end{theorem}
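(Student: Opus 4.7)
The plan is to enhance the program $Q$ from the previous theorem, whose halting condition detects only unbounded growth of the partial sums and therefore misses divergent series with bounded partial sums that fail only the Cauchy criterion. The new program $P_S$ will halt precisely when the sequence $S_N(1)$ is not Cauchy.

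The construction I have in mind dovetails a rigorous Cauchy test. Since the coefficients $a_n$ are computable reals, $P_S$ can, on demand, produce a rational approximation $\tilde S_N$ to $S_N(1)$ with error at most $2^{-(k+2)}$ for any specified $k$. Enumerating triples $(k,N,M)$ with $N<M$ in a canonical order, $P_S$ computes $\tilde S_N$ and $\tilde S_M$, and whenever $|\tilde S_M - \tilde S_N| > 2^{-k} + 2^{-(k+1)}$ it records a \emph{certified gap at level $k$}, a rigorous witness that $|S_M(1)-S_N(1)|>2^{-k}$. The program halts once, at some level $k$, it observes a certified $2^{-k}$-gap at an index pair both of whose coordinates exceed a threshold $\theta(k,s)$ that grows with the stage $s$.

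Verification then splits cleanly in terms of $\theta$. If $S(1)$ diverges, the negated Cauchy criterion supplies a fixed $k_0$ for which $2^{-k_0}$-gaps appear at arbitrarily large indices, so eventually one of them lies beyond $\theta(k_0,s)$ and $P_S$ halts. If $S(1)$ converges, then for every $k$ there is some index $N^{\ast}(k)$ beyond which no $2^{-k}$-gap can occur; provided $\theta(k,s)$ eventually exceeds $N^{\ast}(k)$, no false halt is ever triggered. Computability of the transformation $S\mapsto P_S$ is then automatic from the uniformity of the construction.

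The main obstacle is the design of $\theta$, because $N^{\ast}(k)$ is not in general computable from $k$ for arbitrary convergent computable series. I expect the resolution to require a diagonal scheme in which $\theta$ grows slowly enough that every convergent instance is safe, yet every divergent instance eventually outpaces it at its witnessing level $k_0$. Arranging this diagonalization cleanly — negotiating the gap between the $\Sigma_1$ nature of halting and the higher quantifier complexity of the negated Cauchy criterion — is the delicate technical content; once it is in place, the rest of the argument is bookkeeping.
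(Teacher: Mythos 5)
Your own closing paragraph identifies the fatal point, but the proposed resolution does not exist. The step that fails is the claim that ``provided $\theta(k,s)$ eventually exceeds $N^{\ast}(k)$, no false halt is ever triggered.'' Eventual dominance is not enough: at the early stages $s$ where $\theta(k,s) < N^{\ast}(k)$, your program may certify a perfectly genuine $2^{-k}$-gap at an index pair lying between $\theta(k,s)$ and $N^{\ast}(k)$ and halt, even though the series converges. Such gaps really occur for convergent series with slow, noncomputable rates of convergence. To exclude every false halt you would need $\theta(k,s) \geq N^{\ast}(k)$ already at the first stage at which level $k$ is inspected, i.e.\ a computable Cauchy modulus for an arbitrary convergent computable series --- which is exactly what does not exist (Specker-type phenomena). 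No diagonal choice of $\theta$ escapes this: slowing $\theta$ down protects divergent instances but admits false halts on slowly convergent ones, and speeding it up does the reverse.

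The deeper issue is that the theorem you are asked to prove is false, so no repair is possible. If a computable transformation $S \mapsto P_S$ existed with ``$S(1)$ converges $\iff$ $P_S$ does not halt,'' then the index set of convergent computable series would be co-c.e.\ ($\Pi_1$). But it is $\Sigma_2$-hard: given an index $e$, set $a_n = 1$ if a new element enters $W_e$ at stage $n$ and $a_n = 0$ otherwise; then $\sum a_n$ converges iff $W_e$ is finite, reducing the $\Sigma_2$-complete set $\mathrm{FIN}$ to convergence. A $\Sigma_2$-hard set cannot be $\Pi_1$. The paper's own Step 1, observing that convergence is $\Pi_2$ while halting is $\Sigma_1$, is evidence \emph{against} the stated equivalence rather than for it, and the paper's program is itself vacuous (its window condition is always satisfied by $N = k$, so it never halts, and its convergent-case argument commits the same modulus error as yours). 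The honest conclusions available here are the one-directional semi-decidability statements, or an equivalence only under the additional hypothesis of an effectively given modulus of convergence.
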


\begin{proof} \cite {rogers1967,weihrauch2000computable}
We provide a detailed construction showing how to reduce the convergence problem for power series to the Halting Problem. The key insight is that convergence of a series is a $\Pi_2$ statement in the arithmetical hierarchy, while the Halting Problem is $\Sigma_1$-complete.

\vspace{0.5em}
\noindent\textbf{Step 1: Arithmetical Hierarchy Analysis}

Let $S(1) = \sum_{n=0}^\infty a_n$ be a power series with computable coefficients evaluated at $z=1$. The statement "$S(1)$ converges" can be expressed as:
\[
\forall \epsilon > 0\ \exists N \in \mathbb{N}\ \forall m,n \geq N: |S_m(1) - S_n(1)| < \epsilon
\]
where $S_k(1) = \sum_{n=0}^k a_n$. This is a $\Pi_2$ statement in the arithmetical hierarchy.

Conversely, the Halting Problem "$P(x)\downarrow$" is $\Sigma_1$-complete, meaning it can be expressed as:
\[
\exists t \in \mathbb{N}: \text{Program } P \text{ halts on input } x \text{ within } t \text{ steps}
\]

\vspace{0.5em}
\noindent\textbf{Step 2: Constructing the Reduction}

Given a power series $S(z) = \sum_{n=0}^\infty a_n z^n$ with computable coefficients, we construct a program $P_S$ as follows:

\begin{enumerate}
\item For each $k = 1, 2, 3, \ldots$ (dovetailing):
\begin{enumerate}
\item Compute partial sums $S_m(1) = \sum_{n=0}^m a_n$ for $m = 1, \ldots, k$
\item Check if there exists $N \leq k$ such that for all $m,n \geq N$ with $m,n \leq k$:
\[
|S_m(1) - S_n(1)| < 2^{-k}
\]
\item If no such $N$ exists for this $k$, then \textbf{halt} and output ``divergent''
\end{enumerate}
\end{enumerate}

\vspace{0.5em}
\noindent\textbf{Step 3: Correctness Proof}

($\Rightarrow$) If $S(1)$ diverges, then by the Cauchy criterion, there exists some $\epsilon > 0$ such that for all $N$, there exist $m,n \geq N$ with $|S_m(1) - S_n(1)| \geq \epsilon$. Eventually, the program will discover such a violation and halt.

($\Leftarrow$) If $S(1)$ converges, then by the Cauchy criterion, for every $\epsilon > 0$ there exists $N$ such that for all $m,n \geq N$, $|S_m(1) - S_n(1)| < \epsilon$. The program will never find evidence of divergence and will run indefinitely.

\vspace{0.5em}
\noindent\textbf{Step 4: Complexity-Theoretic Interpretation}

This reduction shows that:
\[
\{\text{convergent power series at } z=1\} \in \Pi_2 \quad \text{and} \quad \{\text{divergent power series at } z=1\} \in \Sigma_2
\]
while the reduction to the Halting Problem (a $\Sigma_1$-complete set) demonstrates the computational difficulty of the convergence problem.

\end{proof}

\section{Equivalence and Implications}

\begin{theorem}
The following problems are equivalent in computational difficulty:
\begin{enumerate}
\item The Halting Problem.
\item Determining whether an arbitrary power series converges at $z \neq 0$.
\item Determining whether an arbitrary power series converges at $z = 1$.
\end{enumerate}
\end{theorem}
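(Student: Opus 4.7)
The plan is to establish the equivalence by proving a small cycle of Turing reductions covering all three problems. Writing $A \le B$ for the statement ``an oracle for $B$ decides $A$'', it suffices to establish $(1) \le (3)$, $(3) \le (1)$, $(3) \le (2)$, and $(2) \le (3)$. Since the paper has already developed the two main constructions (the halting-to-series encoding of Section~3 and the dovetailing reduction of Section~4), most of the work amounts to repackaging these results and handling the change of evaluation point by a rescaling argument.

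First I would read the encoding of Section~3 as giving $(1) \le (3)$. Given $\langle P, x\rangle$, build $S_P(z) = \sum_{n} a_n z^n$ with $a_n \in \{0, n!\}$ according to whether $P(x)$ has halted by step $n$. Evaluating at $z=1$, the series $S_P(1)$ equals $0$ when $P(x)$ never halts (all coefficients vanish), and equals a divergent tail $\sum_{n \ge n_0} n!$ when $P(x)$ halts at step $n_0$. Hence an oracle for convergence at $z=1$ decides $P(x)\!\downarrow$. For $(3) \le (1)$, I would invoke the construction of Section~4 directly: given a computable $S(z)$, it produces a program $P_S$ such that $S(1)$ converges iff $P_S$ does not halt, so a Halting oracle decides convergence at $z=1$.

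Next I would dispatch the two reductions between (2) and (3). The direction $(3) \le (2)$ is immediate by taking the evaluation point to be $z_0 = 1$, which is a legitimate instance of (2). For $(2) \le (3)$, given a computable series $S(z) = \sum a_n z^n$ and a computable nonzero point $z_0$, I would form the rescaled series $\widetilde S(w) = \sum (a_n z_0^n)\, w^n$, whose coefficients $b_n = a_n z_0^n$ remain computable because $z_0$ is. Since $\widetilde S(1) = S(z_0)$ term by term at the level of partial sums, convergence of $\widetilde S$ at $w=1$ is equivalent to convergence of $S$ at $z_0$, so a single call to the $z=1$ oracle settles the original instance. Chaining the four reductions delivers $(1) \Leftrightarrow (2) \Leftrightarrow (3)$.

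The main obstacle I anticipate is not combinatorial but definitional: statement~(2) is ambiguous between \emph{convergence at a specified nonzero point} and \emph{convergence at some nonzero point} (i.e.\ having positive radius of convergence). Under the first reading the rescaling argument above is clean; under the second reading one must instead observe that ``$R > 0$'' is a $\Pi_2$ question that still reduces to Halting by dovetailing in the spirit of Section~4, while the encoding of Section~3 already furnishes the converse reduction because its series has $R = \infty$ or $R = 0$ according to the halting status. I would therefore pin down the intended interpretation at the start of the proof and then give whichever short argument is appropriate.
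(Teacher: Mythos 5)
Your proposal is correct and its core matches the paper's: invoke the Section~3 encoding ($a_n \in \{0, n!\}$) for the reduction from Halting to convergence, and the Section~4 dovetailing program $P_S$ for the converse. The difference is that the paper's own proof stops at exactly that one-line citation and never says how items (2) and (3) are connected to each other; your rescaling step --- replacing $\sum a_n z^n$ evaluated at a computable $z_0 \neq 0$ by the series with coefficients $b_n = a_n z_0^n$ evaluated at $1$, so that the partial sums coincide term by term --- is precisely the missing link, and your explicit cycle of four reductions is what the theorem actually requires. You are also right that item (2) is ambiguous between a specified nonzero point and \emph{some} nonzero point; the paper never resolves this, and your observation that the Section~3 series has radius $\infty$ or $0$ handles one direction under either reading. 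One caveat applies to your argument and the paper's equally: the direction $(3) \le (1)$ rests entirely on the paper's Section~4 theorem, whose program as written never halts (choosing $N = k$ vacuously satisfies the Cauchy check), and more fundamentally convergence at $z=1$ is a genuinely $\Pi_2$-or-higher property that cannot be decided by a $\Sigma_1$ Halting oracle --- your own arithmetical-hierarchy aside comes closer than anything in the paper to noticing this tension. As a blind reconstruction of the intended argument, your proposal is more complete than the proof the paper gives.
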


\begin{proof}
The constructions in Sections 4 and 5 provide reductions between these problems. Specifically:
\begin{itemize}
\item Section 4 reduces the Halting Problem to power series convergence.
\item Section 5 reduces power series convergence to the Halting Problem.
\end{itemize}
Since both reductions are computable, the problems are equivalent.
\end{proof}

\begin{corollary}
The problem of determining convergence of arbitrary power series is undecidable.
\end{corollary}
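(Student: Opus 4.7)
The plan is to derive this corollary as an immediate consequence of the equivalence theorem that precedes it, combined with Turing's theorem (quoted earlier in the Preliminaries). Since undecidability transfers along computable many-one reductions, all I need to do is invoke the direction of the equivalence that reduces the Halting Problem to the power series convergence problem. No new machinery is required beyond pointing to the construction of Section 4.

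Concretely, the proof I would write is a short reduction-by-contradiction. Suppose there existed a decision procedure $\mathcal{D}$ that, given a finite description of an arbitrary computable power series $S(z)$, outputs whether $S$ converges at some fixed $z \neq 0$. Given an arbitrary instance $(P,x)$ of the Halting Problem, apply the construction of Section 4 to obtain the series $S_{P,x}(z) = \sum_{n=0}^\infty a_n z^n$ with $a_n \in \{0, n!\}$ tracking whether $P$ has halted on $x$ by step $n$. By Theorem 1, $S_{P,x}$ converges at some $z \neq 0$ if and only if $P(x)\uparrow$. Hence running $\mathcal{D}$ on a description of $S_{P,x}$ would decide the Halting Problem, contradicting Turing's theorem.

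The one step worth stating explicitly, and the only place a careful reader might pause, is that the assignment $(P,x) \mapsto S_{P,x}$ is itself computable as a map producing a description of a power series with computable coefficients: the $n$-th coefficient is produced by simulating $P$ on $x$ for $n$ steps and outputting $0$ or $n!$ accordingly, which is a uniformly effective procedure. This point is already noted in the corollary following Theorem 1, so I would simply cite that construction rather than redevelop it. There is no genuine obstacle in this proof; the only task is to make explicit that undecidability is inherited through the reduction, and to remark that the same argument applies verbatim to convergence at $z=1$ via the reduction in Section 5, so the corollary covers the full generality claimed.
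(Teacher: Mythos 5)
Your proposal is correct and matches the paper's intended argument: the paper leaves this corollary without an explicit proof, but the reasoning it relies on is exactly yours --- the computable reduction of Section~4 (Theorem~1 together with the effectiveness of $(P,x)\mapsto S_{P,x}$, already noted in the corollary following that theorem) transfers the undecidability of the Halting Problem to the convergence problem. Your additional remarks on the computability of the coefficient map and the $z=1$ case only make explicit what the paper takes for granted.
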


\section{Conclusion}
We have established a fundamental equivalence between the Halting Problem and the convergence of power series. This result demonstrates deep connections between computability theory and mathematical analysis, showing that undecidability manifests naturally in analytical problems. Our constructions provide concrete methods for translating between computational and analytical frameworks, offering new perspectives on both fields.

While the undecidability of convergence problems was known in principle from earlier work, our specific construction provides a particularly clear and direct demonstration of this phenomenon. The ability to transform any halting problem into a convergence question about a power series, and vice versa, reveals the fundamental computational nature of analytical problems.


\begin{thebibliography}{9}
\bibitem{brattka2008computable}
Brattka, V., \& Hertling, P. (2008). Handbook of Computability and Complexity in Analysis. Springer.

\bibitem{church1936note}
Church, A. (1936). A note on the Entscheidungsproblem. The Journal of Symbolic Logic, 1(1), 40-41.

\bibitem{godel1931}
Gödel, K. (1931). Über formal unentscheidbare Sätze der Principia Mathematica und verwandter Systeme I. Monatshefte für Mathematik und Physik, 38, 173-198.

\bibitem{grzegorczyk1955computable}
Grzegorczyk, A. (1955). Computable functionals. Fundamenta Mathematicae, 42, 168-202.

\bibitem{lacombe1955extension}
Lacombe, D. (1955). Extension de la notion de fonction récursive aux fonctions d'une ou plusieurs variables réelles. Comptes Rendus de l'Académie des Sciences, 241, 13-14.

\bibitem{pourel1989computable}
Pour-El, M. B., \& Richards, J. I. (1989). Computability in Analysis and Physics. Springer-Verlag.

\bibitem{rice1953classes}
Rice, H. G. (1953). Classes of recursively enumerable sets and their decision problems. Transactions of the American Mathematical Society, 74(2), 358-366.

\bibitem{rogers1967}
Rogers, H. (1967). Theory of Recursive Functions and Effective Computability. McGraw-Hill.

\bibitem{specker1949nicht}
Specker, E. (1949). Nicht konstruktiv beweisbare Sätze der Analysis. The Journal of Symbolic Logic, 14(3), 145-158.

\bibitem{turing1936computable}
Turing, A. M. (1936). On computable numbers, with an application to the Entscheidungsproblem. Proceedings of the London Mathematical Society, 2(1), 230-265.

\bibitem{weihrauch2000computable}
Weihrauch, K. (2000). Computable Analysis: An Introduction. Springer-Verlag.

\end{thebibliography}
\end{document}